\newcommand{\PTS}[2]{\ensuremath{{\Gamma}_{#1,#2}}}
\newcommand{\R}{\ensuremath{\mathbb{R}}}
\newcommand{\LF}{\ensuremath{\lambda}}
\def\qed{\relax\ifmmode\hskip2em \Box\else\unskip\nobreak\hskip1em $\Box$\fi}
\newcommand{\refFigure}[1]{Fig.~\ref{#1}}
\newcommand{\refProposition}[1]{Proposition~\ref{#1}}
\newcommand{\refTheorem}[1]{Theorem~\ref{#1}}
\newcommand{\Equ}[1]{Eq.~(\ref{#1})}
\newcommand{\Conv}[1]{\ensuremath{\mathrm{conv}(#1)}}
\begin{document}

\pagestyle{empty}

\mainmatter

\title{Convex shapes and convergence speed of discrete tangent estimators}

\author{Jacques-Olivier Lachaud  \and Fran\c{c}ois de Vieilleville}

\authorrunning{Lachaud et al.}

\tocauthor{Jacques-Olivier Lachaud, Fran\c{c}ois de Vieilleville (LaBRI, Universit\'{e} Bordeaux 1)}

\institute{LaBRI, Univ. Bordeaux 1, 351 cours de la Lib\'{e}ration, 33405 Talence Cedex, France }



\maketitle

\newcommand{\BigO}[1]{\ensuremath{ \mathcal{O}( #1 )}}
\newcommand{\DG}[0]{\ensuremath{\mathrm{Dig}}}
\newcommand{\Dig}[1]{\ensuremath{\DG_{#1}}}
\newcommand{\GDig}[1]{\ensuremath{\mathrm{D}^G_{#1}}}
\newcommand{\SF}[0]{\ensuremath{\mathbb{F}}}
\newcommand{\GQ}[0]{\ensuremath{G}} 
\newcommand{\DE}[1]{\ensuremath{\mathcal{#1}}} 
\newcommand{\Cite}[1]{\cite{#1}}
\newcommand{\CSF}[0]{\ensuremath{\mathbb{F}_{c}^3}}
\newcommand{\CSFM}[0]{\ensuremath{\mathbb{F}_{c}^3(M)}}
\newcommand{\Z}[0]{\ensuremath{\mathbb{Z}}}

\newcommand{\refClaim}[1]{Claim~\ref{#1}}
\spnewtheorem{myclaim}[theorem]{Claim}{\bfseries}{\itshape}

\newcommand{\TANEDGE}[1]{\ensuremath{\mathcal{T}^\mathrm{edge}_M}}
\newcommand{\TANMS}[1]{\ensuremath{\mathcal{T}^\mathrm{ms}_M}}

\begin{abstract}
  Discrete geometric estimators aim at estimating geometric
  characteristics of a shape with only its digitization as input data.
  Such an estimator is multigrid convergent when its estimates tend
  toward the geometric characteristics of the shape as the
  digitization step $h$ tends toward 0.  This paper studies the
  multigrid convergence of tangent estimators based on maximal digital
  straight segment recognition. We show that such estimators are
  multigrid convergent for some family of convex shapes and that their
  speed of convergence is on average
  $\BigO{h^{\frac{2}{3}}}$. Experiments confirm this result and
  suggest that the bound is tight.
\end{abstract}

\section{Introduction}

The problem of estimating geometric quantities of digitized shapes
like area, perimeter, tangents or curvatures is an important and
active research field. Discrete geometric estimators have indeed many
applications in vision, shape analysis and pattern recognition. It is
however difficult to compare objectively their respective accuracy,
since for a given shape there exists infinitely many shapes with the
same digitization. In this paper, we are mainly interested by the {\em
  multigrid convergence} property of some estimators, which is one of
the few existing objective criteria. This property ensures that a
better resolution brings a better approximation. For concrete
applications, the {\em speed of convergence} is a very important
criterion too, since it has a huge impact on their accuracy at
standard resolutions.

Formally, taking the definitions of \cite{Klette04}, let $\Dig{h}$ be
some digitization process of step $h$. Let $\SF$ be a family of shapes
in $\R^2$ and let $\GQ$ be a geometric quantity defined for all $X \in
\SF$. A {\em discrete estimator of $\GQ$} is a map that associates to
a digitization $\Dig{h}(X)$ an estimation of $\GQ(X)$. A discrete
estimator \DE{\GQ} is {\em multigrid convergent} toward ${\GQ}$ for
$\SF$ and $\DG$ iff, for any $X \in \SF$, there exists some $h_X>0$
for which
\begin{equation*} \textstyle{
  \forall 0<h<h_X, | \DE{\GQ}(\Dig{h}(X)) - \GQ(X) | \le \tau(h) ~, }
\end{equation*}
where $\tau :\R^+ \rightarrow \R^{+*}$ has limit value 0 at
$h=0$. This function defines the {\em speed of convergence} of
$\DE{\GQ}$ toward $\GQ$.

For instance, for the family of plane convex bodies with
$C^3$-boundary and positive curvatures, denoted later on by $\CSF$,
the best known speed of convergence for an area and or a moment
estimator is $\BigO{h^\frac{15}{11}}$ \cite{Huxley90,Klette00}, for a
perimeter estimator it is $\BigO{h}$ \cite{Sloboda98}. There are fewer
results concerning local geometric quantities like tangent direction
or curvature. The first works on this topic were presented in
\cite{Coeurjolly02a}, where some evidence of the convergence of tangent
estimators based on digital straight segment (DSS) recognition were
given. We report the recent work of \Cite{IVC06} which establishes the
multigrid convergence of tangent direction estimators based on maximal
DSS recognition, with an average speed of convergence of
$\BigO{h^{\frac{1}{3}}}$. The recent result of
\cite{deVieilleville05a} has confirmed that there is yet no curvature
estimator proven to be multigrid convergent.

In this paper, we prove a new upper bound for the average speed of
convergence of discrete tangent estimators which are based on maximal
DSS recognition around the point of interest
\cite{Feschet99,Lachaud05a}. This new bound of
$\BigO{h^{\frac{2}{3}}}$ was suggested by the experimental study of
\cite{Lachaud05a} and enhances the previous bound of
$\BigO{h^{\frac{1}{3}}}$ \Cite{IVC06}. The proof of this enhanced
result, obtained for shapes in $\CSF$, follows these steps:
\begin{description}
\item[Section~2] Digitizations of convex shapes are digital convex
  polygons (CDP). We achieve thus a better localization of the shape
  boundary with respect to the digitized boundary
  (\refProposition{prop:poserror}).
\item[Section~3] We first claim that the DSS characteristics have an
  average asymptotic behaviour depending on their length
  (\refClaim{claim:asymptotic-pqdelta}). The asymptotic edge length of
  CDP is recalled (\refTheorem{thm:Balog:nbV1}), which induces a
  superlinear localization of the shape boundary in
  $\BigO{h^{\frac{4}{3}}}$ (\refProposition{prop:poserrorcv}). With
  these relations, the slope of digital edges of CDP is shown to be
  multigrid convergent to the tangent direction of a nearby boundary
  point with an average speed of $\BigO{h^{\frac{2}{3}}}$
  (\refProposition{prop:edgeconvergence}).
\item[Section~4] The behaviour of maximal DSS is identical to the
  behavior of digital edges, since any maximal DSS contains at least
  one digital edge (\refProposition{prop:ms-includes-edge}). The
  average speed of convergence of tangent estimators based on
  maximal DSS is thus achieved (\refTheorem{thm:maxDssConvergence}).
\end{description}
We will then conclude and open some perspectives in Section~5.

\section{Preliminary definitions and first properties}

\subsection{Digitization and convex digital polygon}

Let $S$ be some subset of $\R^2$. Its {\em Gauss
digitization} of grid step $h$ is defined as $\GDig{h}(S) = S \cap
h\Z \times h\Z $. Thus, the considered digitized
objects are subsets of the rescaled digital plane $h\Z \times h\Z$.

A \emph{convex digital polygon (CDP)} $\Gamma$ is a subset of the
digital plane $h\Z \times h\Z$ with a single 4-connected component
equal to the Gauss digitization of its convex hull, i.e. $\Gamma =
\GDig{h}(\Conv{\Gamma})$. Its {\em vertices} $(V_i)_{i=1..e}$ form the
minimal subset for which $\Gamma=\GDig{h}(\Conv{V_1, \ldots,V_e})$
(vertices are ordered clockwise). The points of $\Gamma$ which are
8-adjacent to some point not in $\Gamma$ form the {\em border} of
$\Gamma$. It is a 4-connected digital path that visits every $V_i$ in
order. When moving clockwise, any subpath $\PTS{V_i}{V_{i+1}}$ is
called a {\em digital edge} of $\Gamma$, while the Euclidean straight
segment $V_i V_{i+1}$ is an {\em edge} of $\Gamma$.

For small enough grid steps, Gauss digitizations of
any shape in $\CSF$ are convex digital polygons. This result will be
admitted throughout the paper.

\subsection{Standard line, digital straight segment, maximal segments}
\begin{definition} \cite{Rev91} The set of points $(x,y)$ of the
  digital plane $\Z^2$ verifying $\mu \le ax-by < \mu + |a|+|b|$, with
  $a$, $b$ and $\mu$ integer numbers, is called the {\em standard
    line} with slope $a/b$ and shift $\mu$. Standard lines are the
  4-connected discrete lines.
\end{definition}

The quantity $ax-by$ is called the {\em remainder} of the line. Points
whose remainder is $\mu$ (resp. $\mu+|a|+|b|-1$) are called upper
(resp. lower) leaning points. Any finite connected portion of a
standard line is called a {\em digital straight segment (DSS)}. Its
{\em characteristics} are the slope $a/b$ and the shift $\mu$ of the
standard line containing it with smallest $|a|+|b|$.

Most of the results demonstrated here are directly transferable for
8-connected curves since there is a natural bijective transformation
between standard and naive digital lines.  In the paper, all the
reasoning is made in the first octant, but it extends naturally to the
whole digital plane.

\subsection{Recursive decomposition of DSS}
We here recall a few properties about \emph{patterns} composing DSS
and their close relations with continued fractions. They constitute a
powerful tool to describe discrete lines with rational slopes
\cite{Berstel97,HarWri60}.  Without loss of generality all definitions
and propositions stated below hold for standard lines and DSS with
slopes in the first octant (e.g. $\frac{a}{b}$ with $ 0 \leq a \leq b
$). In the first octant, only two Freeman moves are possible, 0 is a
step to the right and 1 is a step up, 4-connected digital paths can be
expressed as words of $\{0,1\}^{*}$.
\begin{definition}
  Given a standard line $(a,b,\mu)$, we call \emph{pattern} of
  characteristics $(a,b)$ the word that is the succession of Freeman
  moves between any two consecutive upper leaning points. The Freeman
  moves defined between any two consecutive lower leaning points is
  the previous word read from back to front and is called the
  \emph{reversed pattern}.
\end{definition}
Since a DSS has at least either two upper or two lower leaning points,
a DSS $(a,b,\mu)$ contains at least one \emph{pattern} or one
\emph{reversed pattern} of characteristics $(a,b)$. There exists
recursive transformations for computing the \emph{pattern} of a
standard line from the \emph{simple continued fraction} of its slope
(see \cite{Berstel97}, \cite{Voss93} Chap.~4 and \cite{Klette04}
Chap.~9), here Berstel approach better suits our purpose.

A rational slope $z$ in $]0,1]$ can be written uniquely as the
continued fraction
\[
\textstyle{
z = 0 + \cfrac{1}{ u_1 + \cfrac{1}{\ldots +
     \cfrac{1}{u_n}}} }
\]
and is conveniently denoted $[0,u_{1}, \ldots, u_n]$.  The
$u_{i}$ are called the \emph{partial coefficients} and the continued
fraction formed with the $k$ first \emph{partial coefficient} is
said to be a \emph{$k$-th convergent} of $z$ and is a rational numbers
denoted by $z_{k}$.  The \emph{depth} of a \emph{$k$-th convergent}
equals $k$. We conveniently denote $p_{k}$ the numerator (resp.
$q_{k}$ the denominator) of a \emph{$k$-th convergent}.

We recall a few more relations regarding the way convergents are related and
which will be used later on in this paper:
\begin{eqnarray} 
 & \textstyle{ \forall k \geq 1 } & \textstyle{ \quad p_{k}q_{k-1} - p_{k-1}q_{k} = (-1)^{k+1} \label{pattern:rec:dif} }\\
\textstyle{ p_{0}=0 \quad p_{-1}=1 \quad \quad \quad } & \textstyle{ \forall k \geq 1 } &  \textstyle{ \quad p_{k}=u_{k}p_{k-1}+p_{k-2} \label{pattern:rec:num} }\\
\textstyle{ q_{0}=1 \quad q_{-1}=0 \quad \quad \quad } & \textstyle{ \forall k \geq 1 } &  \textstyle{ \quad q_{k}=u_{k}q_{k-1}+q_{k-2} \label{pattern:rec:den} }
\end{eqnarray}
Given a rational slope between $0$ and $1$ its continued fraction is
finite and for each $i$, $u_{i}$ is a strictly positive integer. In
order to have a unique writing we consider that the last \emph{partial
  coefficient} is greater or equal to two; except for slope $1 =
[0,1]$.

Let us now explain how to compute the \emph{pattern} associated with a
rational slope $z$ in the first octant. Let us define $E$ a mapping
from the set of positive rational number smaller than one onto the
Freeman-move's words. More precisely: $E(z_0) = 0$, $E(z_1) =
0^{u_1}1$ and the other values are expressed recursively:
\begin{eqnarray} 
  \textstyle{  E(z_{2i+1})} & \textstyle{=} &   \textstyle{E(z_{2i})^{u_{2i+1}} E(z_{2i-1})}
  \label{pattern:rec:odd}\\
  \textstyle{ E(z_{2i})} & \textstyle{ =} &  \textstyle{ E(z_{2i - 2}) E(z_{2i-1})^{u_{2i}} }
  \label{pattern:rec:even} 
\end{eqnarray}

It has been shown that this mapping constructs the pattern $(a,b)$ for
any rational slope $z=\frac{a}{b}$. Fig. \ref{fig:aop-pattern_I}
illustrates the construction of an odd pattern using the mapping $E$.
The Minkowski $\mathcal{L}^{1}$ length of $E(z_{k})$ equals $p_{k}+
q_{k}$ and can be computed recursively using \Equ{pattern:rec:num} and
(\ref{pattern:rec:den}). Moreover we recall that any digital edge is a
pattern or a succession of the same pattern, its {\em digital
parameters} are its slope, denoted by $\frac{p}{q}$, and the number
$\delta$ of repetitions of the pattern $E(\frac{p}{q})$.

\begin{figure}[tbp]
  \begin{center}
    
    \begin{picture}(0,0)%
\includegraphics{./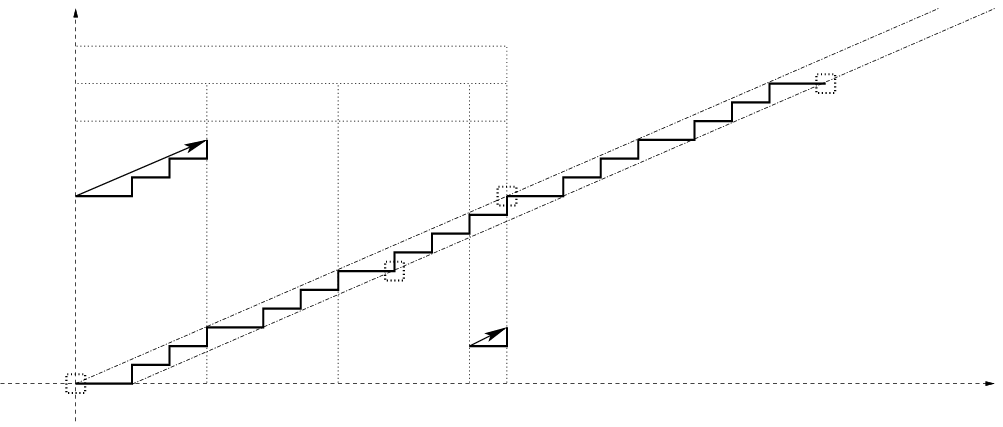}%
\end{picture}%
\setlength{\unitlength}{1184sp}%
\begingroup\makeatletter\ifx\SetFigFont\undefined%
\gdef\SetFigFont#1#2#3#4#5{%
  \reset@font\fontsize{#1}{#2pt}%
  \fontfamily{#3}\fontseries{#4}\fontshape{#5}%
  \selectfont}%
\fi\endgroup%
\begin{picture}(15924,6706)(1789,-7055)
\put(2101,-961){\makebox(0,0)[lb]{\smash{{\SetFigFont{7}{8.4}{\familydefault}{\mddefault}{\updefault}{\color[rgb]{0,0,0}$Y$}%
}}}}
\put(16876,-6961){\makebox(0,0)[lb]{\smash{{\SetFigFont{7}{8.4}{\familydefault}{\mddefault}{\updefault}{\color[rgb]{0,0,0}$X$}%
}}}}
\put(7651,-5161){\makebox(0,0)[lb]{\smash{{\SetFigFont{7}{8.4}{\familydefault}{\mddefault}{\updefault}{\color[rgb]{0,0,0}$L_{1}$}%
}}}}
\put(14401,-2161){\makebox(0,0)[lb]{\smash{{\SetFigFont{7}{8.4}{\familydefault}{\mddefault}{\updefault}{\color[rgb]{0,0,0}$L_{2}$}%
}}}}
\put(10201,-2761){\makebox(0,0)[lb]{\smash{{\SetFigFont{7}{8.4}{\familydefault}{\mddefault}{\updefault}{\color[rgb]{0,0,0}$U_{2}$}%
}}}}
\put(3151,-2011){\makebox(0,0)[lb]{\smash{{\SetFigFont{7}{8.4}{\familydefault}{\mddefault}{\updefault}{\color[rgb]{0,0,0}$E(z_{2})$}%
}}}}
\put(5251,-2011){\makebox(0,0)[lb]{\smash{{\SetFigFont{7}{8.4}{\familydefault}{\mddefault}{\updefault}{\color[rgb]{0,0,0}$E(z_{2})$}%
}}}}
\put(7351,-2011){\makebox(0,0)[lb]{\smash{{\SetFigFont{7}{8.4}{\familydefault}{\mddefault}{\updefault}{\color[rgb]{0,0,0}$E(z_{2})$}%
}}}}
\put(9451,-2011){\makebox(0,0)[lb]{\smash{{\SetFigFont{7}{8.4}{\familydefault}{\mddefault}{\updefault}{\color[rgb]{0,0,0}$E(z_{1})$}%
}}}}
\put(5251,-2761){\makebox(0,0)[lb]{\smash{{\SetFigFont{7}{8.4}{\familydefault}{\mddefault}{\updefault}{\color[rgb]{0,0,0}$p_{2}$}%
}}}}
\put(10051,-5761){\makebox(0,0)[lb]{\smash{{\SetFigFont{7}{8.4}{\familydefault}{\mddefault}{\updefault}{\color[rgb]{0,0,0}$p_{1}$}%
}}}}
\put(3151,-3811){\makebox(0,0)[lb]{\smash{{\SetFigFont{7}{8.4}{\familydefault}{\mddefault}{\updefault}{\color[rgb]{0,0,0}$q_{2}$}%
}}}}
\put(1951,-6061){\makebox(0,0)[lb]{\smash{{\SetFigFont{7}{8.4}{\familydefault}{\mddefault}{\updefault}{\color[rgb]{0,0,0}$U_{1}$}%
}}}}
\put(2101,-6961){\makebox(0,0)[lb]{\smash{{\SetFigFont{7}{8.4}{\familydefault}{\mddefault}{\updefault}{\color[rgb]{0,0,0}$O$}%
}}}}
\put(9451,-6211){\makebox(0,0)[lb]{\smash{{\SetFigFont{7}{8.4}{\familydefault}{\mddefault}{\updefault}{\color[rgb]{0,0,0}$q_{1}$}%
}}}}
\put(3901,-1411){\makebox(0,0)[lb]{\smash{{\SetFigFont{7}{8.4}{\familydefault}{\mddefault}{\updefault}{\color[rgb]{0,0,0}$E(z_{3})=[0,2,3,3]=\frac{10}{23}$}%
}}}}
\put(13051,-4261){\makebox(0,0)[lb]{\smash{{\SetFigFont{7}{8.4}{\familydefault}{\mddefault}{\updefault}{\color[rgb]{0,0,0}$z_{3} = [0,2,3,3]$}%
}}}}
\end{picture}%

    \caption{A digital straight segment of characteristics $(10,23,0)$
      with an odd depth slope, taken between the origin and its second
      lower leaning point }
    \label{fig:aop-pattern_I}
  \end{center}
\end{figure}

\subsection{Localization accuracy of digitized convex shapes}

One can expect that the boundary of a convex shape is approximately at
distance $h$ from the border of its digitization of grid step $h$. In
fact, for a convex shape $S$, its {\em boundary $\partial S$} is much
closer to the convex hull of $\GDig{h}(S)$ than $h$. A better
localization of the shape is thus possible, as stated below and
illustrated in \refFigure{fig:ULUConstrains}:

\begin{proposition} \label{prop:poserror} 
  Let $S$ be a convex shape such that $\GDig{h}(S)$ is a CDP $\Gamma$
  for some $h$. Consider an edge $V_i V_{i+1}$ of $\Gamma$ with slope
  in the first octant. Then any point of the boundary $\partial S$
  above the straight segment $V_i V_{i+1}$ has a vertical distance to
  it no greater than $\frac{h}{q_{n-1}}$, where the slope of this edge
  is the irreducible fraction $\frac{p_{n}}{q_{n}}$. 

\end{proposition}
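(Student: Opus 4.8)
The plan is to exploit the two facts that pin the shape between the digitization data. First, since $S$ is convex, $\Conv{\Gamma}\subseteq S$, so the edge $V_iV_{i+1}$ lies inside $S$; second, since $\GDig{h}(S)=\Gamma=\GDig{h}(\Conv{\Gamma})$, the set $S$ contains no grid point of $h\Z\times h\Z$ outside $\Conv{\Gamma}$. Consequently the region bounded below by the segment $V_iV_{i+1}$ and above by the arc of $\partial S$ lying over it contains no grid point, and this arc is a concave graph (as a function of the abscissa) vanishing at the two vertices. Working in the rescaled integer frame, I place the supporting line $L$ of the edge as $p_nx-q_ny=c$; the vertices $V_i,V_{i+1}$ are leaning points of the edge and hence lie on $L$, and the nearest parallel grid line on the outer side is at vertical distance exactly $h/q_n$.

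The crux is to locate a single grid point just outside $S$ whose position forces the whole arc down. Using the Bézout relation \Equ{pattern:rec:dif}, the grid point $G=V_i+(q_{n-1},p_{n-1})$ (or its reflection, according to the parity of $n$) has remainder $c+(-1)^{n+1}$, so it sits on the first outer parallel line, at horizontal offset $q_{n-1}$ from $V_i$ and at vertical distance $h/q_n$ above $L$. Since $G\notin\Conv{\Gamma}$ it is not in $S$, so the concave arc stays below $G$: writing $f$ for the height of $\partial S$ above $L$ as a function of the horizontal offset, this reads $f(q_{n-1})\le h/q_n$, together with $f(0)=0$ at $V_i$.

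I would then bound the maximum of $f$ purely by concavity. If the apex of the arc is at offset $t^\ast\ge q_{n-1}$, concavity with $f(0)=0$ gives $f(q_{n-1})\ge\frac{q_{n-1}}{t^\ast}f(t^\ast)$, whence $f(t^\ast)\le\frac{t^\ast}{q_{n-1}}\,\frac{h}{q_n}\le\frac{q_n}{q_{n-1}}\,\frac{h}{q_n}=\frac{h}{q_{n-1}}$, using that the apex offset cannot exceed the horizontal extent $q_n$ of one pattern. The symmetric grid point $G'=V_{i+1}-(q_{n-1},p_{n-1})$ covers the case where the apex lies in the left half of the segment, so every point of the arc is at vertical distance at most $h/q_{n-1}$ from $V_iV_{i+1}$, which is the claim.

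The step I expect to be the main obstacle is making this uniform when the digital edge is a succession of $\delta>1$ copies of the pattern $E(\frac{p_n}{q_n})$: anchoring only at the two vertices degrades the bound by a factor $\delta$. The remedy is that each pattern period carries its own excluded outer grid point at height $h/q_n$, namely the images of $G$ under the period translation $(q_n,p_n)$, and these all lie on the single parallel line at distance $h/q_n$ from $L$. Combining the concavity extrapolation with this periodic family of excluded points — effectively applying the single-pattern estimate inside each period while controlling the arc at the period junctions — is the delicate part, and is where the recursive structure of the convergents through \Equ{pattern:rec:num}--(\ref{pattern:rec:den}) must be invoked.
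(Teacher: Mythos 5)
A preliminary remark: the paper does not prove this proposition itself (it defers to \cite{Lachaud06a}), so your attempt can only be judged on its own merits. Your overall strategy is the right one and is visibly the one behind \refFigure{fig:ULUConstrains}: squeeze $\partial S$ between $\Conv{\Gamma}$ (contained in $S$) and the grid points lying strictly above the supporting line $L$ of the edge (necessarily outside $S$), use the B\'ezout relation \Equ{pattern:rec:dif} to exhibit such a forbidden point at vertical distance exactly $h/q_n$ from $L$ and at horizontal offset $q_{n-1}$ or $q_n-q_{n-1}$ from a vertex, and then propagate this constraint along the concave arc. For a single-pattern edge ($\delta=1$) this does deliver the bound $h/q_{n-1}$.

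Two things keep the write-up from being a proof. First, the anchor ``$f(0)=0$ at $V_i$'' is false: $\partial S$ does not pass through the vertices of $\Gamma$, it only passes \emph{above} them, and in the extremal configurations it does so at height close to $h$, far above $h/q_n$. Your inequality $f(t^*)\le\frac{t^*}{q_{n-1}}f(q_{n-1})$ happens to survive because it only needs $f(0)\ge 0$, but the same slip hides a real case analysis: the two points $G=V_i+(q_{n-1},p_{n-1})$ and $G'=V_{i+1}-(q_{n-1},p_{n-1})$ have remainders $c+(-1)^{n+1}$ and $c-(-1)^{n+1}$, so only \emph{one} of them lies on the outer parallel line; the ``apex in the left half'' case must therefore be handled by extrapolating through the single available forbidden point towards the opposite vertex, which yields $h/(q_n-q_{n-1})$ and requires $q_n\ge 2q_{n-1}$, i.e.\ the normalization $u_n\ge 2$ (and a separate treatment of slope $1$). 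None of this is in your text. Second, and more seriously, the case $\delta>1$ is not done, as you yourself note: anchoring only at the vertices degrades the bound to $\delta h/q_{n-1}$, and the proposed remedy (one forbidden point per period) is only a plan. To make it work one must extrapolate through \emph{two} constraints on the same side of the apex --- two consecutive forbidden points at height $h/q_n$ spaced $q_n$ apart, or one forbidden point together with $f\ge 0$ at the far endpoint of the edge --- which gives $f(t^*)\le\frac{h}{q_n}\left(1+\frac{t^*-s}{s-s'}\right)\le\frac{2h}{q_n}\le\frac{h}{q_{n-1}}$ in the interior periods; that is a different computation from the one you wrote. As submitted, the argument establishes the proposition only for single-pattern edges and rests on an incorrect statement about where $\partial S$ meets the vertices.
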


The proof of this proposition can be found in \cite{Lachaud06a}.

  \begin{figure}[tbp]
    \begin{center}

      \begin{picture}(0,0)%
\includegraphics{./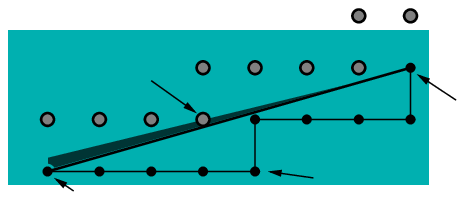}%
\end{picture}%
\setlength{\unitlength}{2960sp}%
\begingroup\makeatletter\ifx\SetFigFont\undefined%
\gdef\SetFigFont#1#2#3#4#5{%
  \reset@font\fontsize{#1}{#2pt}%
  \fontfamily{#3}\fontseries{#4}\fontshape{#5}%
  \selectfont}%
\fi\endgroup%
\begin{picture}(3845,1401)(2440,-10939)
\put(4600,-10755){\makebox(0,0)[lb]{\smash{{\SetFigFont{10}{12.0}{\familydefault}{\mddefault}{\updefault}{\color[rgb]{0,0,0}$L$}%
}}}}
\put(3397,-9967){\makebox(0,0)[lb]{\smash{{\SetFigFont{10}{12.0}{\familydefault}{\mddefault}{\updefault}{\color[rgb]{0,0,0}$P$}%
}}}}
\put(3106,-10880){\makebox(0,0)[lb]{\smash{{\SetFigFont{10}{12.0}{\familydefault}{\mddefault}{\updefault}{\color[rgb]{0,0,0}$V_{1}$}%
}}}}
\put(5554,-10257){\makebox(0,0)[lb]{\smash{{\SetFigFont{10}{12.0}{\familydefault}{\mddefault}{\updefault}{\color[rgb]{0,0,0}$V_{2}$}%
}}}}
\put(5333,-9826){\makebox(0,0)[lb]{\smash{{\SetFigFont{10}{12.0}{\familydefault}{\mddefault}{\updefault}{\color[rgb]{0,0,0}$P'$}%
}}}}
\put(2440,-10523){\makebox(0,0)[lb]{\smash{{\SetFigFont{10}{12.0}{\familydefault}{\mddefault}{\updefault}{\color[rgb]{0,0,0}$P''$}%
}}}}
\end{picture}%

    \end{center} \vspace{-0.3cm}

    \caption{Two local constraints for the real underlying convex shape  
      \label{fig:ULUConstrains}}
  \end{figure}

\section{Asymptotic behaviour of edges of digitized shapes}

We study here asymptotic properties of edges of digitized convex
shapes. Their average length is first exhibited and an experimental
study of the digital parameters of edges is presented. The direction of
digital edges is shown to converge toward the tangent direction. 

\subsection{Asymptotic number and digital parameters of edges}

Let $S$ be some shape of $\CSF$. We have the following theorem:

\begin{theorem}{(Adapted from \cite{Balog1991}, Theorem~2)}
  \label{thm:Balog:nbV1} 
  For a small enough $h$, the Gauss digitization $\GDig{h}(S)$ of $S$
  is a CDP and its number of edges $n_e(\GDig{h}(S))$ 
  satisfies:
  \[ \textstyle{ c_{1}(S) \frac{1}{h^{\frac{2}{3}}} \leq
    n_{e}(\GDig{h}(S)) \leq c_{2}(S) \frac{1}{h^{\frac{2}{3}}} } \]
  where the constants $c_{1}(S)$ and $c_{2}(S)$ depend on extremal
  bounds of the curvatures along $S$. Hence for a disc $c_{1}$ and
  $c_{2}$ are absolute constants.
\end{theorem}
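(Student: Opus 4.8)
The plan is to adapt the proof of Balog and Bárány by first removing the grid step through an affine rescaling. Since $\GDig{h}(S) = S \cap h\Z \times h\Z$, the homothety of ratio $1/h$ maps $\GDig{h}(S)$ onto $\Z^2 \cap \lambda S$ with $\lambda = 1/h$; hence $\Conv{\GDig{h}(S)}$ is, up to this scaling, the convex lattice polygon $P = \Conv{\Z^2 \cap \lambda S}$. The number of edges $n_e$ equals its number of vertices and is scale invariant, so it suffices to establish $c_1 \lambda^{2/3} \le n_e \le c_2 \lambda^{2/3}$, the constants being controlled by the extremal curvatures of $\partial S$ (note that the curvature of $\partial(\lambda S)$ is $\kappa/\lambda$ and its perimeter is $\Theta(\lambda)$). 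That $\GDig{h}(S)$ is a CDP for small enough $h$ has already been admitted.

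\textbf{Upper bound.} I would use the classical primitive-vector argument. Orient $P$ and write each edge as $\delta_i \mathbf{e}_i$ with $\mathbf{e}_i$ a primitive integer vector; convexity forces the $\mathbf{e}_i$ to have pairwise distinct directions, so they form $n_e$ distinct primitive vectors. Since there are only $\BigO{R^2}$ primitive vectors of norm at most $R$, the $n_e$ shortest admissible directions already have norms summing to $\Theta(n_e^{3/2})$; consequently the perimeter of $P$ satisfies $\sum_i \|\delta_i \mathbf{e}_i\| \gtrsim n_e^{3/2}$. As this perimeter is $\Theta(\lambda)$, we get $n_e^{3/2} = \BigO{\lambda}$, i.e. $n_e = \BigO{\lambda^{2/3}} = \BigO{h^{-\frac{2}{3}}}$, with a constant driven by the perimeter and hence by the curvature bounds.

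\textbf{Lower bound.} Here I would construct enough vertices explicitly. Fix $R \asymp \lambda^{1/3}$ and select $\Theta(R^2) = \Theta(\lambda^{2/3})$ primitive directions of norm at most $R$ whose pairwise angular gaps are $\gtrsim 1/R^2$. For each such direction $\mathbf{e}$, let $p(\mathbf{e})$ be the lattice point of $\Z^2 \cap \lambda S$ extremal in direction $\mathbf{e}$, which is a vertex of $P$ as soon as it is the unique extremal point. Because $\partial S$ has curvature bounded below by $\kappa_{\min} > 0$, the support function is $C^1$ and the contact point with outer normal $\mathbf{e}/\|\mathbf{e}\|$ moves monotonically along $\partial(\lambda S)$; two directions differing by an angle $\Delta\theta$ have contact points separated by arc length $\asymp \lambda\,\Delta\theta$, so for our selection this separation is $\gtrsim \lambda/R^2 \asymp \lambda^{1/3} \ge 1$ once $\lambda$ is large. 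Distinct directions therefore capture distinct lattice vertices, giving $n_e \gtrsim \lambda^{2/3} = h^{-\frac{2}{3}}$.

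\textbf{Main obstacle.} The delicate point is the lower bound: one must guarantee that each prescribed supporting direction really realises a \emph{new} vertex of the integer hull, i.e. that the extremal lattice point exists, is unique, and differs from its neighbours. This is exactly where both curvature bounds enter quantitatively --- $\kappa_{\min} > 0$ to ensure the dilated body bulges past each candidate supporting line far enough to expose a fresh lattice point, and $\kappa_{\max} < \infty$ to bound the speed of the contact point so that neighbouring directions are not wasted on the same vertex --- and it is here that the precise estimates of Balog and Bárány (Theorem~2) are needed and tracked to produce the shape-dependent constants $c_1(S), c_2(S)$ (absolute for the disc). After the rescaling reduction I would import these estimates essentially verbatim.
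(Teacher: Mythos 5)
The paper does not actually prove this statement: it is imported from Balog and B\'ar\'any, and the only ``adaptation'' performed is the observation that the homothety of ratio $1/h$ identifies $\GDig{h}(S)$ with $\Z^2 \cap \frac{1}{h}S$, so that the lattice-polygon result applies with $\lambda = 1/h$ --- which is precisely your opening reduction, so in substance your treatment (rescale, then invoke the cited theorem) coincides with the paper's. Beyond that you sketch the underlying argument, and the two halves are not on equal footing. The upper-bound paragraph is a complete, standard, self-contained proof: distinct primitive edge directions, only $\BigO{R^2}$ primitive vectors of norm at most $R$, hence the $n_e$ shortest admissible directions force the perimeter to be at least of order $n_e^{3/2}$, and the perimeter of $\lambda S$ is controlled by the curvature bounds. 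The lower-bound paragraph, however, would not survive as written if it were meant to stand alone: the lattice point of $\Z^2 \cap \lambda S$ extremal in a direction $\mathbf{e}$ is in general not close to the boundary contact point with outer normal $\mathbf{e}$ (a single lattice point may be extremal for a whole cone of directions, and nothing in your sketch bounds the angular width of these normal cones from above), so the arc-length separation of the \emph{contact points} does not by itself prevent two of your chosen directions from selecting the same lattice point. Ruling this out requires showing that near each contact point the body contains a lattice point very close to the supporting line --- the cap estimates that constitute the real content of Balog--B\'ar\'any's Theorem~2. Since you explicitly flag this obstacle and propose to import their estimates verbatim at exactly that step, your proposal is consistent and matches the paper's level of rigor, but the reader should not mistake the lower-bound heuristic for a proof.
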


As an immediate corollary, the average Minkowski $\mathcal{L}^1$
length of edges grows as $\Theta(h^{\frac{2}{3}})$. The question is:
what is the average behavior of the digital parameters $p$, $q$ and
$\delta$ of edges, knowing that the average digital length
$l=\delta(p+q)$ satisfies
$\Theta(h^{\frac{2}{3}}/h)=\Theta(1/h^{\frac{1}{3}})$ ? Since the slope
of edges should tend toward the slope of points on the boundary of $S$
and since almost all these points have irrational slope, $p$ and $q$
should tend toward infinity almost everywhere with a bounded
$\delta$. This observation is confirmed by experiments, as illustrated
on \refFigure{fig:pqdCircle}, which plots the means and standard
deviations of $\frac{q}{l}$ and $\delta$ for edges on finer and finer
digitizations of a disk. It is clear that $q$ (but also $p$) satisfies
the same asymptotic law as $l$ while $\delta$ remains bounded on
average. We make hence the following claim:

\begin{figure}[tbp]
  \begin{center}
    \epsfig{file=./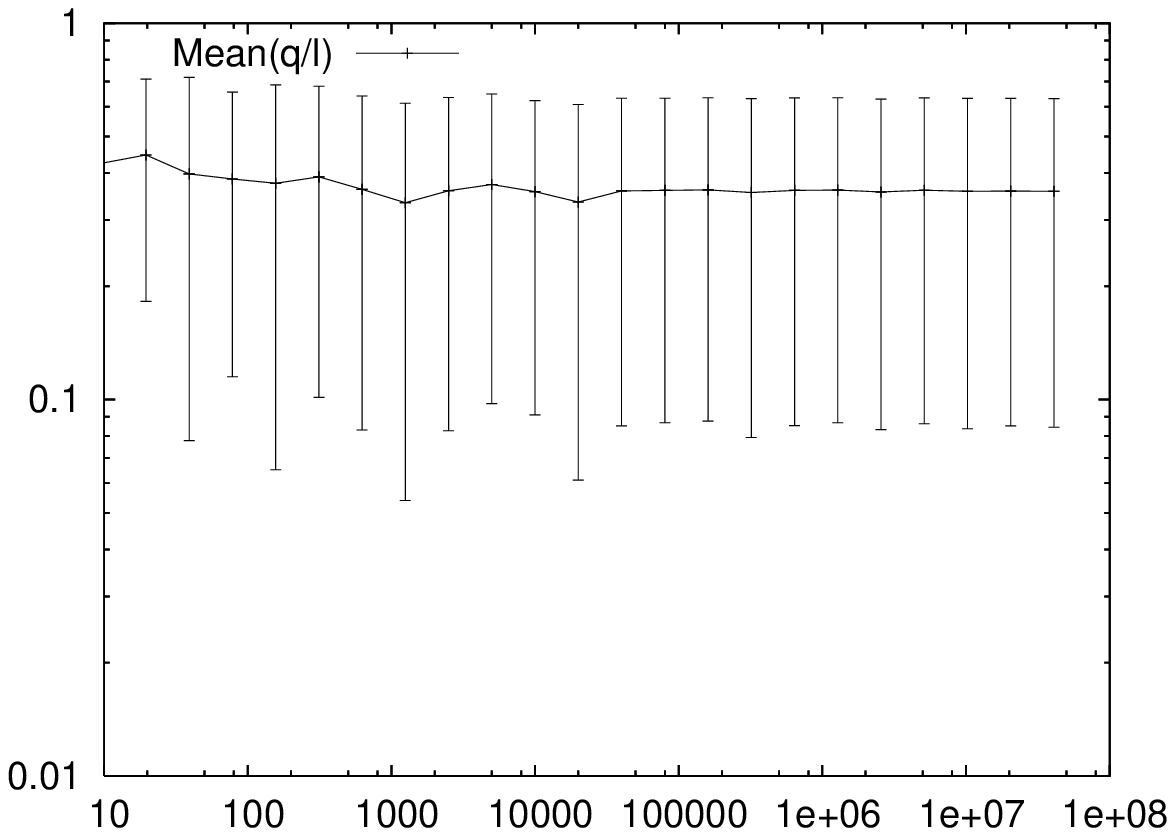,width=0.49\textwidth}
    \epsfig{file=./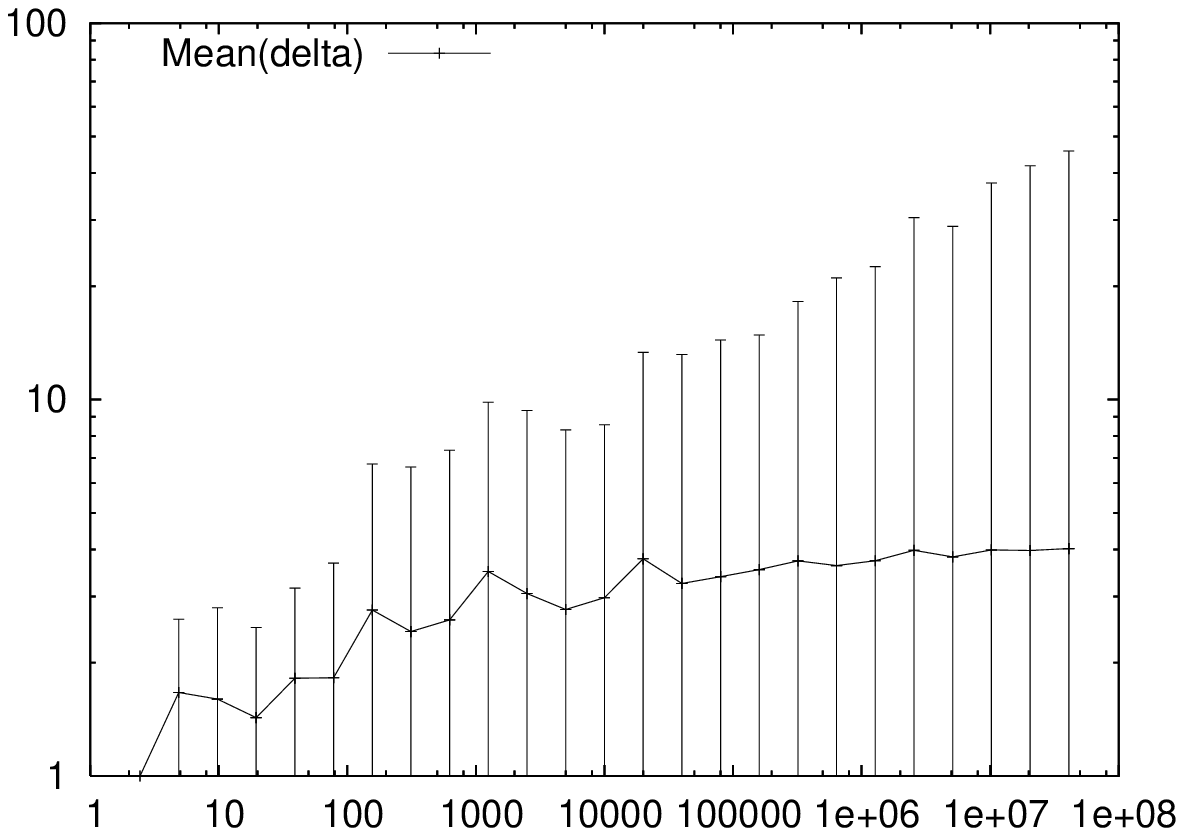,width=0.49\textwidth}
  \end{center}
  \caption{Plots in log-space of the means of digital parameters
    $\frac{q}{l}$ (left) and $\delta$ (right) for the edges of Gauss
    digitizations of a disk, as functions of the inverse $\frac{1}{h}$
    of the grid step. Standard deviations are symbolized with
    errorbars
    \label{fig:pqdCircle} }
\end{figure} 

\begin{myclaim} \label{claim:asymptotic-pqdelta}
  On average, the digital parameters $p$, $q$ of the edges of
  $\GDig{h}(S)$ with $S \in \CSF$ grow as $\Theta(\frac{1}{h^{\frac{1}{3}}})$,
  whereas $\delta$ is bounded, when $h$ tends toward 0.
\end{myclaim}

This claim induces a first result about the convergence speed of the
localization constraints of \refProposition{prop:poserror}
($q_{n-1}$ has the same asymptotic law as $q_n$).

\begin{proposition} \label{prop:poserrorcv} 
  Let $S \in \CSF$ and $\GDig{h}(S)$ its Gauss digitization. For a
  small enough $h$, in each octant, the vertical distance between any
  edge of $\GDig{h}(S)$ and $\partial S$ is bounded. On average, this
  bound is $\BigO{h^{\frac{4}{3}}}$ (and never worse than
  $\BigO{h}$).
\end{proposition}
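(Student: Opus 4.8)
The plan is to combine the per-edge localization bound of \refProposition{prop:poserror} with the asymptotic growth law of \refClaim{claim:asymptotic-pqdelta}. By the symmetry noted in Section~2 it suffices to argue in the first octant, where \refProposition{prop:poserror} applies directly. For an edge $V_i V_{i+1}$ whose slope, written as an irreducible fraction, is $\frac{p_n}{q_n}$, that proposition bounds the vertical distance between the edge and the portion of $\partial S$ lying above it by $\frac{h}{q_{n-1}}$. Since $q_{n-1} \geq q_0 = 1$ for every edge, this immediately gives the worst-case estimate $\frac{h}{q_{n-1}} \leq h$, proving both that the distance is bounded for small enough $h$ and that it is never worse than $\BigO{h}$.

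To obtain the average bound I would first show that $q_{n-1}$ inherits the growth rate of $q_n$. The recurrence \Equ{pattern:rec:den}, namely $q_n = u_n q_{n-1} + q_{n-2}$, together with $q_{n-2} < q_{n-1}$ and $u_n \geq 1$, yields $\frac{q_n}{u_n+1} < q_{n-1} \leq q_n$. Thus $q_{n-1}$ and $q_n$ agree up to a factor controlled by the last partial quotient $u_n$, so that $q_{n-1} = \Theta(q_n)$ as soon as $u_n$ remains bounded, which is the generic behaviour of the slopes occurring as edge directions. Since the edge slope denominator is exactly $q = q_n$, \refClaim{claim:asymptotic-pqdelta} tells us that $q_n$ grows on average as $\Theta(\frac{1}{h^{1/3}})$, and by the comparison above so does $q_{n-1}$. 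Substituting into the per-edge bound gives $\frac{h}{q_{n-1}} = \BigO{h \cdot h^{\frac{1}{3}}} = \BigO{h^{\frac{4}{3}}}$, the claimed average estimate.

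I expect the delicate point to be the averaging step rather than the geometry. \refProposition{prop:poserror} controls the quantity $\frac{h}{q_{n-1}}$, whereas \refClaim{claim:asymptotic-pqdelta} describes the mean behaviour of $q_{n-1}$ itself; because $x \mapsto 1/x$ is convex, one cannot simply replace $q_{n-1}$ by its average inside the bound. The honest way to close this gap is to push the averaging through the empirical distribution of edge slopes, i.e.\ to estimate the mean of $\frac{1}{q_{n-1}}$ directly, using that both $\delta$ and the last partial quotient stay bounded on average so that no degenerate family of short, low-denominator edges can dominate the sum. This is exactly where \refClaim{claim:asymptotic-pqdelta} is doing the real work, and the remainder of the argument is the elementary continued-fraction comparison above.
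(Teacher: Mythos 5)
Your argument is essentially the paper's own: the paper gives no separate proof of this proposition, deriving it directly from \refProposition{prop:poserror} and \refClaim{claim:asymptotic-pqdelta} with only the parenthetical remark that $q_{n-1}$ has the same asymptotic law as $q_n$. Your write-up is in fact more explicit than the paper's, both in justifying $q_{n-1}=\Theta(q_n)$ via the recurrence \Equ{pattern:rec:den} and in flagging the averaging subtlety (mean of $1/q_{n-1}$ versus $1/$mean of $q_{n-1}$) that the paper silently glosses over.
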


\subsection{Convergence of tangent estimation based on edge direction}

Let $M$ be a point on the boundary of a shape $S \in \CSF$ and let
$\theta_M(S)$ be the tangent direction of $\partial S$ at $M$. We
propose to estimate $\theta_M(S)$ as the arctangent of the slope of
the digital edge of the CDP $\GDig{h}(S)$ lying below $M$ (for the
first octant; definitions for other octants are straightforward). We
denote this discrete estimator by $\TANEDGE{M}$. Assuming
\refClaim{claim:asymptotic-pqdelta}, we have the multigrid convergence
of this tangent estimator:

\begin{proposition} \label{prop:edgeconvergence} 
  Let $M$ be a point in the plane and let $\CSFM$ be the subset of
  shapes $S$ of $\CSF$ with $M \in \partial S$. The discrete estimator
  $\TANEDGE{M}$ at point $M$ is multigrid convergent toward the
  tangent direction $\theta_M$ for $\CSFM$ and Gauss
  digitization. Furthermore, its speed of convergence is on average
  \BigO{h^{\frac{2}{3}}}.
\end{proposition}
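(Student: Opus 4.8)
The plan is to control the angular error by inserting two intermediate slopes, using the fine localization of \refProposition{prop:poserror} for the first passage and the bounded curvature of $\CSF$ for the second. Write $\frac{p_n}{q_n}$ (irreducible) for the slope of the digital edge $V_iV_{i+1}$ lying below $M$, and let $\alpha=\arctan\frac{p_n}{q_n}$ be its slope angle, so that $\TANEDGE{M}=\alpha$. Since we reason in the first octant and $\partial S$ is a convex graph over $V_iV_{i+1}$, the point $M$ lies on the boundary arc $A$ sitting above this edge; let $P$ and $P'$ be the boundary points vertically above $V_i$ and $V_{i+1}$, the two extremities of $A$. The triangle inequality then splits the error as $|\alpha-\theta_M|\le|\alpha-\beta|+|\beta-\theta_{M'}|+|\theta_{M'}-\theta_M|$, where $\beta$ is the slope angle of the chord $PP'$ and $M'$ is an auxiliary boundary point produced below.

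First I would compare $\alpha$ with $\beta$. The edge and the chord $PP'$ share the same horizontal run, namely the horizontal $\mathcal{L}^1$-component $h\delta q_n$ of the edge, whereas by \refProposition{prop:poserror} the vertical gaps $V_iP$ and $V_{i+1}P'$ are each at most $\frac{h}{q_{n-1}}$. Hence the two slopes differ by at most $\frac{2h/q_{n-1}}{h\delta q_n}=\frac{2}{\delta\,q_{n-1}q_n}$, and as $\arctan$ is $1$-Lipschitz the same bounds $|\alpha-\beta|$. Invoking \refClaim{claim:asymptotic-pqdelta} — under which $q_n$ and $q_{n-1}$ obey the same law $\Theta(h^{-\frac{1}{3}})$ while $\delta$ stays bounded — this first term is $\BigO{h^{\frac{2}{3}}}$ on average.

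Next I would pass from the chord to a tangent. As $\partial S$ is convex and $C^3$, the mean value theorem applied to $A$ yields a point $M'\in A$ whose tangent direction $\theta_{M'}$ equals $\beta$, which kills the middle term. For the last term, both $M$ and $M'$ lie on $A$; by \refTheorem{thm:Balog:nbV1} the average $\mathcal{L}^1$ length of an edge is $\Theta(h^{\frac{2}{3}})$, hence so are its Euclidean length and, up to the endpoint offsets $\frac{h}{q_{n-1}}$, the length $|A|$. Since the curvature of $S$ is bounded by some $\kappa_{\max}$, the tangent direction is $\kappa_{\max}$-Lipschitz in arc length, so $|\theta_{M'}-\theta_M|\le\kappa_{\max}\,|A|=\BigO{h^{\frac{2}{3}}}$ on average. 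Summing the three contributions gives $|\TANEDGE{M}-\theta_M|=\BigO{h^{\frac{2}{3}}}$ on average, and multigrid convergence follows because this bound vanishes with $h$.

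The delicate point is the first step. The raw per-edge localization $\frac{h}{q_{n-1}}$ combined with the unconditional length bound of \refProposition{prop:poserrorcv} only yields $\BigO{h}$, a bound valid for every single edge; it is precisely the averaged behaviour asserted in \refClaim{claim:asymptotic-pqdelta}, namely that $q$ grows like $h^{-\frac{1}{3}}$ rather than remaining bounded, that simultaneously upgrades the slope gap to $\frac{1}{\delta q_{n-1}q_n}=\Theta(h^{\frac{2}{3}})$ and the arc length to $\Theta(h^{\frac{2}{3}})$. The whole improvement over the earlier $\BigO{h^{\frac{1}{3}}}$ estimate therefore rests on this averaged control of the digital parameters, and no per-edge argument can do better than $\BigO{h}$.
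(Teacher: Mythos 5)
Your argument is essentially the paper's proof in different packaging: the paper parameterizes $\partial S$ as a graph $(x,f(x))$ with $M$ at the origin, combines the localization bound of \refProposition{prop:poserror} with a Taylor expansion of $f$, and evaluates at the far vertex $x_{V_{i+1}} \ge \tfrac{1}{2}h\delta q_n$, which yields exactly your two terms $\frac{2}{\delta q_n q_{n-1}} + \BigO{h\delta q_n}$ before invoking \refClaim{claim:asymptotic-pqdelta}; your chord-plus-mean-value decomposition derives the same two terms from the same two ingredients, so the content is identical. The one place where your write-up is thinner than the paper's is the convergence claim itself: an averaged $\BigO{h^{\frac{2}{3}}}$ bound (conditional on \refClaim{claim:asymptotic-pqdelta}) does not by itself furnish a pointwise $\tau(h)\to 0$ for every shape, and the paper closes this with a separate deterministic remark — citing \cite{IVC06} that no edge keeps bounded digital length as $h\to 0$, so $\delta q_n \to \infty$ and both terms of the per-edge bound vanish; you should add that step (or an equivalent one) rather than deducing multigrid convergence from the average rate.
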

\begin{proof}
  We have to prove that for a shape $S \in \CSFM$, there exists some
  grid step $h_S$ for which $\forall 0< h <h_S,
  |\TANEDGE{M}(\GDig{h}{S})-\theta_M(S)| \le \tau(h)$. Without loss of
  generality, we assume that the tangent direction $\theta_M(S)$ is in
  the first octant and we locally parameterize $\partial S$ as
  $(x,f(x))$, setting $M$ as the origin of the coordinate axes
  ($(x_M,y_M)=(0,0)$). Let $h_0$ be the digitization step below which
  $\GDig{h}(S)$ is a CDP.

  Since the slope of $\partial S$ around $M$ is in the first octant,
  there exists some $h_S < h_0$ such that for any $0<h<h_S$, the
  vertical straight segment going down from $M$ intersect some edge
  $V_i V_{i+1}$ of $\GDig{h}(S)$. Let $\delta$, $p$ and $q$ be the
  digital parameters of this edge, let $z$ be its slope $\frac{p}{q}$
  and let $n$ be the depth of $z$ (i.e. $z =
  \frac{p_{n}}{q_{n}}$). The horizontal distance between $M$ and
  either $V_i$ or $V_{i+1}$ is necessarily greater than half the
  horizontal length of $V_i V_{i+1}$. We assume $V_{i+1}$ maximizes this
  distance, without loss of generality. We get:
  \begin{equation} 
    \label{eq:bound-on-x_right} \textstyle{
    \frac{ h \delta q_{n}}{2}\leq x_{v_{i+1}} \leq h \delta q_{n} }
  \end{equation}

  From \refProposition{prop:poserror} we have for any $x \in
  [x_{v_i},x_{v_{i+1}}]$:
  \begin{equation}
    \label{eq:bound-on-f} \textstyle{
    z x - \frac{h}{q_{n-1 }} \leq f(x) \leq z x + \frac{h}{q_{n- 1}} ~.}
  \end{equation}
  Inserting Taylor expansion of $f(x)$ about $x=0$ in
  \Equ{eq:bound-on-f} induces:
  \begin{eqnarray}
    \label{eq:bound-on-f-Taylor}
    \textstyle{ | z - f'(0) |  \leq \frac{h}{x q_{n - 1 } } + \BigO{x} ~.}
  \end{eqnarray}
  Setting $x=x_{v_{i+1}}$ in \Equ{eq:bound-on-f-Taylor} and using
  both sides of \Equ{eq:bound-on-x_right} gives the inequality
  \begin{equation}
    \label{eq:bound-on-f-2} \textstyle{
    | z - f'(0) | \leq \frac{2 h}{h \delta q_{n} q_{n - 1 } } + \BigO{
      h \delta q_{n}} ~.}
  \end{equation}
  We notice that $z=\tan(\TANEDGE{M}(\GDig{h}(S)))$ and that
  $f'(0)=\tan(\theta_M(S))$. Moreover, for any angle $s,t \in
  [0,\frac{\pi}{4}]$, we have $|s-t| \leq |\tan(s)-\tan(t)|$. With
  these two remarks, \Equ{eq:bound-on-f-2} implies:
  \begin{equation} \textstyle{
    \label{eq:multigrid-cvg-1}   
    | \TANEDGE{M}(\GDig{h}(S)) - \theta_M(S) | \leq \frac{2}{\delta q_{n} q_{n - 1 } } + \BigO{h \delta q_{n}} ~. }
  \end{equation}
  Since $\delta q_n$ is no smaller than half the edge length, it
  follows from \refTheorem{thm:Balog:nbV1} that the dominant term
  $\frac{2}{\delta q_{n} q_{n - 1 }}$ is at least some
  $\BigO{h^\frac{1}{3}}$ on average. In \cite{IVC06} it is also shown
  that there is no edge of bounded length as $h$ tends toward
  $0$. Then the right part of \Equ{eq:multigrid-cvg-1} tends toward
  0. The multigrid convergence is thus shown. At last, assuming
  \refClaim{claim:asymptotic-pqdelta}, \Equ{eq:multigrid-cvg-1} then
  induces
  \begin{equation} \textstyle{
    \label{eq:multigrid-cvg-2} 
      | \TANEDGE{M}(\GDig{h}(S)) - \theta_M(S) | \leq \BigO{h^{\frac{2}{3}}} ~,}
  \end{equation}
  which indicates that the average speed of convergence of
  $\TANEDGE{M}$ is $\BigO{h^{\frac{2}{3}}}$.  \qed
\end{proof}

\section{Tangent estimators based on maximal DSS recognition}

This section discusses the convergence speed of discrete tangent
estimators based on maximal digital straight segment (maximal segment)
recognition. Along a digital path, maximal segments are the
inextensible digital straight segments, otherwise said adding the next
point to the front or to the back constructs a set of digital points
that no standard line contains. The set of all maximal segments of a
digital path can be extracted efficiently in time linear with its
number of points \cite{Feschet99}. Maximal segments have deep links
with edges of convex hulls \cite{deVieilleville05a,Doerksen04}. 

Estimating the tangent direction at some point is then achieved by
considering specific DSS \cite{Vialard96} or maximal segments
\cite{Feschet99} containing this point. A recent
experimental evaluation \cite{Lachaud05a} has shown that tangent
estimators based on maximal segments are accurate and preserve
convexity properties of the real shape. 

Let $M$ be a point on the boundary of a shape $S \in \CSF$ and let
$\theta_M(S)$ be the tangent direction of $\partial S$ at $M$. We
propose to estimate $\theta_M(S)$ as the arctangent of the slope of
any maximal segment of the CDP $\GDig{h}(S)$ lying below $M$ (for the
first octant; definitions for other octants are straightforward). We
denote this discrete estimator by $\TANMS{M}$. We shall prove in
\refTheorem{thm:maxDssConvergence} that this estimator is multigrid
convergent with average convergence speed of
$\BigO{h^\frac{2}{3}}$. As a corollary, the Feschet-Tougne tangent
estimator \cite{Feschet99}, which uses the most centered maximal
segment around $M$, and the $\lambda$-MST estimator \cite{Lachaud05a},
which makes a convex combination of the directions of all maximal
segments around $M$, are also multigrid convergent with same average
speed.

Before proving this theorem, note first that the experimental
evaluation of the $\lambda$-MST estimator, whose absolute error is
displayed in \refFigure{fig:MST-circle-moy-and-MS-moy}, left, suggests
that its average convergence speed is indeed likely to be in
$\BigO{h^{\frac{2}{3}}}$. Secondly, the average Minkowski length of
maximal segments behave as the average length of the edges, that is in
$\BigO{h^{\frac{2}{3}}}$ as shown on
\refFigure{fig:MST-circle-moy-and-MS-moy}, right (the Minkowski length
is $h$ times the digital length). Thirdly, every maximal segment
contains a digital edge as stated in the following proposition whose
proof can be found in \cite{Lachaud06a}:

\begin{figure}[tbp]
  \begin{center}
    \rotatebox[origin=br]{-90}{\epsfig{file=./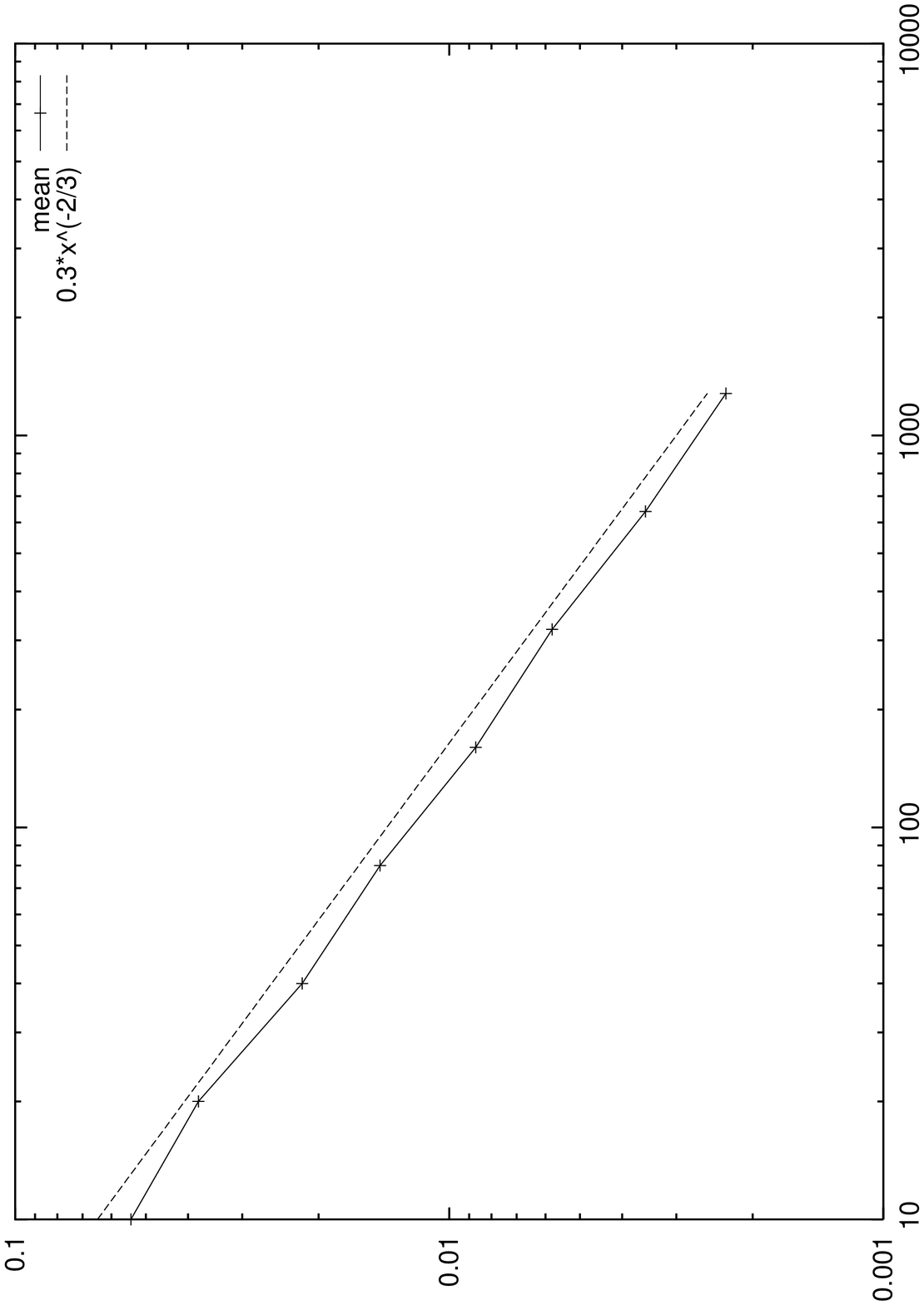,height=0.49\textwidth}}
    \epsfig{file=./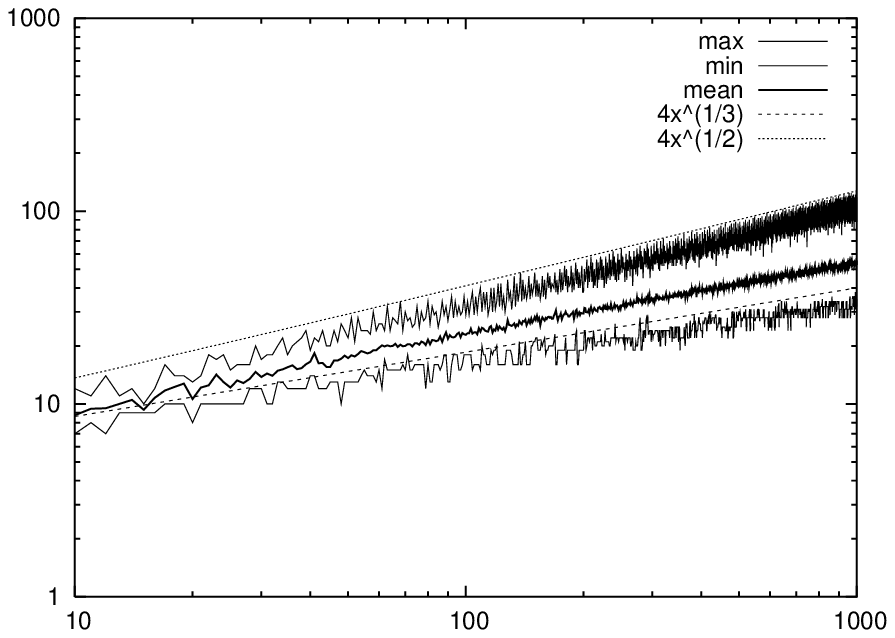,width=0.49\textwidth}
  \end{center}
  \caption{For both plots, the digitized shape is a disk of radius 1
    and the abscissa is the inverse of the digitization step. Left:
    plot in log-space of the mean absolute error between the $\LF$-MST
    tangent direction and the theoretical one (mean taken over all
    digital points).  The convergence speed on this shape is likely to
    be in $\Theta(h^{\frac{2}{3}})$. Right: plot in log-space of the
    digital length of maximal segments, which is on average
    $\Theta(1/h^{\frac{1}{3}})$ }
  \label{fig:MST-circle-moy-and-MS-moy}
\end{figure}

\begin{proposition}  \label{prop:ms-includes-edge}
  On the border of a CDP $\Gamma$, any maximal segment, whose slope is
  some $\frac{p_n}{q_n}$, contains at least one digital edge that has
  either the same slope or its $n-1$-convergent
  $\frac{p_{n-1}}{q_{n-1}}$.
\end{proposition}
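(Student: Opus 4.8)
The plan is to reduce to the first octant (the remaining cases follow from the symmetries already invoked in the paper) and to read the digital edges carried by the maximal segment off its convex hull. Write the slope of the maximal segment as $\frac{p_n}{q_n}$ in depth $n$. Being a DSS, it has at least two leaning points of one type, hence it contains at least one full period, i.e.\ one occurrence of the pattern $E(z_n)$. The digital edges it carries are exactly the runs of a single pattern lying between two consecutive vertices of $\Gamma$, and these vertices are the corners of the convex hull of $\Gamma$ lying on the side of $\partial S$ (the side controlled by \refProposition{prop:poserror}); among the leaning points of the maximal segment these corners are all of one type, which I call the \emph{extremal} leaning points. I would then argue by a case split on how many extremal leaning points the maximal segment carries, recalling that one period between two consecutive leaning points of the other type contains exactly one extremal leaning point, and conversely.

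First I would treat the case of at least two extremal leaning points. Two consecutive ones are exactly one period $(q_n,p_n)$ apart, and every lattice point of the segment strictly between them lies strictly on the non-vertex side of the chord joining them (the two extremal points being the extreme points of that period). Hence this chord is a single edge of the vertex hull, delimited by two consecutive vertices of $\Gamma$, and it is a digital edge of slope exactly $\frac{p_n}{q_n}$; this settles the first alternative.

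The remaining case is a single extremal leaning point, which by the DSS property forces at least two leaning points of the other type, so the maximal segment carries exactly one period $E(z_n)$ bounded by two non-extremal leaning points. Here I would invoke the recursive construction of patterns: for $n$ odd \Equ{pattern:rec:odd} gives $E(z_n)=E(z_{n-1})^{u_n}E(z_{n-2})$, and for $n$ even \Equ{pattern:rec:even} gives $E(z_n)=E(z_{n-2})E(z_{n-1})^{u_n}$. In either case the period exhibits a block $E(z_{n-1})^{u_n}$ of consecutive copies of the $(n-1)$-th pattern; on the vertex hull this block is exposed as a run of the single pattern $E(z_{n-1})$, cut off on one side by the appended $E(z_{n-2})$ and on the other by the end of the period, hence as a digital edge of slope $\frac{p_{n-1}}{q_{n-1}}$. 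This settles the second alternative, and the two cases together prove the statement.

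The step I expect to be the main obstacle is the convex-hull bookkeeping behind the last two paragraphs: certifying precisely which lattice points are vertices of $\Gamma$ rather than mere leaning points, checking that the block $E(z_{n-1})^{u_n}$ is delimited by genuine vertices so that it is one digital edge (neither subdivided nor absorbed into a longer run of slope $\frac{p_n}{q_n}$), and verifying that all intermediate points of a period fall on the correct side of the chords used above. The natural tools are the recursions \Equ{pattern:rec:num} and \Equ{pattern:rec:den} together with the unimodular relation $p_nq_{n-1}-p_{n-1}q_n=(-1)^{n+1}$ of \Equ{pattern:rec:dif}, which guarantees that $\frac{p_{n-1}}{q_{n-1}}$ and $\frac{p_n}{q_n}$ are Farey neighbours so that no slope strictly between them can occur as an intervening edge; \refProposition{prop:poserror} keeps the true boundary $\partial S$ on the correct side throughout.
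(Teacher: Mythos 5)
Note first that the paper does not prove \refProposition{prop:ms-includes-edge} in-line: it defers the proof to the technical report \cite{Lachaud06a}, so there is no in-paper argument to compare against. Judged on its own terms, your proposal assembles the right toolbox (case split on the number of hull-side leaning points, the recursion $E(z_{n})=E(z_{n-1})^{u_{n}}E(z_{n-2})$ or its even-depth analogue, and the unimodular relation of \Equ{pattern:rec:dif}), but it stops short of a proof at precisely the decisive point, which you yourself flag as ``the main obstacle''. The whole content of the proposition is the vertex bookkeeping you defer: showing that the endpoints of the exhibited period or of the block $E(z_{n-1})^{u_{n}}$ are genuine vertices of $\Gamma$, so that the word between them is a digital edge of $\Gamma$ in the sense of the paper (a subpath between \emph{consecutive vertices}), and that this edge lies entirely inside the maximal segment. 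A plan that postpones exactly this step is not yet a proof.

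Moreover, the shortcut you take in your first case is incorrect as stated. Two consecutive hull-side leaning points of the maximal segment need not be consecutive vertices of $\Gamma$: if the digital edge of $\Gamma$ they sit on has slope $\frac{p_n}{q_n}$ with $\delta\ge 2$ repetitions of the pattern, the intermediate leaning points of that edge lie on the same supporting line as $V_i$ and $V_{i+1}$ and are therefore \emph{not} extreme points of $\Conv{\Gamma}$; the chord joining them is a proper sub-segment of a hull edge, and the path between them is not a digital edge of $\Gamma$. (Your convexity argument does correctly rule out a vertex lying \emph{strictly between} two consecutive hull-side leaning points, but it does not make the leaning points themselves vertices.) To rescue this case you must argue separately — e.g.\ via maximality: the digital edge containing these leaning points is a DSS lying on the same standard line of slope $\frac{p_n}{q_n}$, so its union with the maximal segment is still a DSS, forcing the maximal segment to absorb the whole edge. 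A similar maximality/uniqueness argument, combined with the Farey-neighbour relation you invoke to exclude intervening slopes, is what is needed to certify the block $E(z_{n-1})^{u_{n}}$ as a full digital edge in your second case (where, incidentally, the word between two consecutive non-hull-side leaning points is the \emph{reversed} pattern, so the decomposition must be applied to $\widetilde{E(z_{n})}$ rather than to $E(z_{n})$). Until these points are supplied, the argument is an outline of the right shape rather than a complete proof.
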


We may now state our main result, which relies also on
\refClaim{claim:asymptotic-pqdelta}.
\begin{theorem} \label{thm:maxDssConvergence} 
  Let $M$ be a point in the plane and let $\CSFM$ be the subset of
  shapes $S$ of $\CSF$ with $M \in \partial S$. The discrete estimator
  $\TANMS{M}$ at point $M$ is multigrid convergent toward the tangent
  direction $\theta_M$ for $\CSFM$ and Gauss
  digitization. Furthermore, its speed of convergence is on average
  \BigO{h^{\frac{2}{3}}}.
\end{theorem}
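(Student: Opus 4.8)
The plan is to reduce the convergence of the maximal-segment estimator $\TANMS{M}$ to the already-established convergence of the edge-based estimator $\TANEDGE{M}$ from \refProposition{prop:edgeconvergence}, using \refProposition{prop:ms-includes-edge} as the bridge. The key structural fact is that any maximal segment lying below $M$ contains at least one digital edge, whose slope is either the same $\frac{p_n}{q_n}$ as the maximal segment or its $(n-1)$-th convergent $\frac{p_{n-1}}{q_{n-1}}$. So the strategy is to bound the angular error of the maximal segment by the angular error of this contained edge plus the angular discrepancy between the maximal segment's slope and the edge's slope.

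First I would set up the same local framework as in the proof of \refProposition{prop:edgeconvergence}: assume $\theta_M(S)$ lies in the first octant, parameterize $\partial S$ as $(x,f(x))$ with $M$ at the origin, and fix $h_S<h_0$ small enough that $\GDig{h}(S)$ is a CDP and the downward vertical from $M$ meets a maximal segment. Next I would invoke \refProposition{prop:ms-includes-edge} to extract a digital edge $e$ inside this maximal segment. By \refProposition{prop:edgeconvergence} (and the Taylor-expansion argument behind it, which applies to any edge near $M$), the slope of $e$ differs from $\tan(\theta_M(S))$ by $\BigO{h^{\frac{2}{3}}}$ on average. Then I would bound the difference between the slope $\frac{p_n}{q_n}$ of the maximal segment and the slope of $e$: in the worst case $e$ has slope $\frac{p_{n-1}}{q_{n-1}}$, so the gap is $\left|\frac{p_n}{q_n}-\frac{p_{n-1}}{q_{n-1}}\right| = \frac{1}{q_n q_{n-1}}$ by the determinant identity \Equ{pattern:rec:dif}. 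Under \refClaim{claim:asymptotic-pqdelta}, $q_n$ and $q_{n-1}$ both grow as $\Theta(h^{-\frac{1}{3}})$, so this gap is $\BigO{h^{\frac{2}{3}}}$ on average as well.

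Combining the two estimates by the triangle inequality, and again using that $|s-t|\le|\tan(s)-\tan(t)|$ for $s,t\in[0,\frac{\pi}{4}]$ to pass from slope differences to angle differences, yields
\begin{equation*} \textstyle{
  | \TANMS{M}(\GDig{h}(S)) - \theta_M(S) | \leq \BigO{h^{\frac{2}{3}}} ~, }
\end{equation*}
which is the claimed average speed. The multigrid convergence itself (that the bound tends to $0$) follows without \refClaim{claim:asymptotic-pqdelta}, exactly as in \refProposition{prop:edgeconvergence}, since the contained edge converges and the slope gap $\frac{1}{q_n q_{n-1}}$ vanishes as the segments lengthen.

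The main obstacle I anticipate is justifying that the error bound for the contained edge is genuinely inherited by the maximal segment \emph{on average}, rather than only pointwise. The edge guaranteed by \refProposition{prop:ms-includes-edge} need not be the edge directly below $M$, so one must argue that its horizontal offset from $M$ is still controlled by the maximal-segment length (which behaves like the edge length, $\Theta(h^{\frac{2}{3}})$, per \refFigure{fig:MST-circle-moy-and-MS-moy}) so that the $\BigO{x}$ Taylor remainder term stays $\BigO{h^{\frac{2}{3}}}$. Handling this offset carefully — ensuring the contained edge lies close enough to $M$ that \refProposition{prop:poserrorcv} still applies — is the delicate point; the slope-gap bound via \Equ{pattern:rec:dif} and \refClaim{claim:asymptotic-pqdelta} is then routine.
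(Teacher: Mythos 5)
Your proposal is correct and follows essentially the same route as the paper: extract the contained digital edge via \refProposition{prop:ms-includes-edge}, split the error into the edge's angular error (controlled by the Taylor argument of \refProposition{prop:edgeconvergence} at a nearby abscissa $x$ bounded by the maximal-segment length) plus the slope gap $\frac{1}{q_n q_{n-1}}$ from \Equ{pattern:rec:dif}, and conclude with \refClaim{claim:asymptotic-pqdelta}. The ``delicate point'' you flag --- that the edge need not lie below $M$, so $x$ must be bounded by the length of the maximal segment, itself $\BigO{h^{\frac{2}{3}}}$ on average --- is exactly how the paper closes the argument.
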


\begin{proof}
  We take the same notations as in the proof of
  \refProposition{prop:edgeconvergence}. We consider a maximal segment
  $MS$ below $M$ with slope $z_n$. According to
  \refProposition{prop:ms-includes-edge}, it contains a digital edge $VV'$
  of slope $z_n$ or $z_{n-1}$, which may not be below $M$. Worst case
  is for a digital edge with slope $z_{n-1}$ and this is the one
  considered below. We have:
  \begin{eqnarray*}
    \textstyle{ |\TANMS{M}(S)-\theta_M(S)| } & \textstyle{ \le } & \textstyle{ |z_n-f'(0)| } \\
    & \textstyle{ \le } & \textstyle{ |z_{n-1}-f'(0)| + |z_n-z_{n-1}| } \\
    & \textstyle{ \le } & \textstyle{ |z_{n-1}-f'(x)| + \BigO{x}+\frac{1}{q_n q_{n-1}} ~, }\\
  \end{eqnarray*}
  using Taylor relation and \Equ{pattern:rec:dif} to get the last
  inequality.

  If we choose some abscissa $x$ such that $(x,f(x))$ is above the
  edge $VV'$, then the proof of \refProposition{prop:edgeconvergence}
  indicates that $|z_{n-1}-f'(x)| \le \BigO{h^{\frac{2}{3}}}$ on
  average. Since $(x,f(x))$ is above $MS$ too, $x$ cannot be greater
  than the length of $MS$, which is also some $\BigO{h^{\frac{2}{3}}}$ on
  average. At last, \refClaim{claim:asymptotic-pqdelta} similarly
  provides $\frac{1}{q_n q_{n-1}}=\Theta(h^{\frac{2}{3}})$. Summing
  all these bounds concludes the proof. \qed
\end{proof}
As an immediate corollary, the $\LF$-MST estimator \cite{Lachaud05a}
and the Feschet-Tougne tangent estimator \cite{Feschet99} have the
same asymptotic behaviour.

\section{Conclusion}

We have studied properties of digital convex polygons and
exhibited several new results (\refProposition{prop:poserror} and
\refProposition{prop:ms-includes-edge}). We have also examined the
asymptotic properties of digital edges on digitized convex shapes
(\refClaim{claim:asymptotic-pqdelta}), which has led to a position
estimator of average convergence speed $\BigO{h^\frac{4}{3}}$
(\refProposition{prop:poserrorcv}) and to a tangent direction
estimator of average convergence speed $\BigO{h^\frac{2}{3}}$
(\refProposition{prop:edgeconvergence}). At last we have shown the new
bound of $\BigO{h^\frac{2}{3}}$ for the average convergence speed of
tangent estimators based on maximal segments
(\refTheorem{thm:maxDssConvergence}), which matches experimental
evaluation.

These results indicate that curvature estimators relying on
surrounding DSS have most probably an error of $\BigO{h^0}$: position
$\equiv f(x)$ in \BigO{h^\frac{4}{3}}, tangent direction $\equiv
f'(x)$ in \BigO{h^\frac{2}{3}}, curvature $\equiv f''(x)$ probably in
$\BigO{1}$ (uncertainty on tangent further divided by
\BigO{h^\frac{2}{3}}). The problem of exhibiting a multigrid
convergent curvature estimator is thus still open. 

Another straightforward extension of this work would be to investigate
the properties of discrete surfaces and estimators based on digital
plane recognition. However since the problem of finding an enclosing
polyhedron with a minimal number of 2-facets has been proven to be
NP-hard (see \cite{Brimkov06}), the problem would get much harder than the 
two dimensional case.


\bibliographystyle{plain}

\bibliography{iscv} 

\end{document}